%

\documentclass[12pt, final]{article}


\usepackage{virgil_env} 
\usepackage{fullpage}
\usepackage{nips10submit_e}
\nipsfinalcopy

\usepackage{authblk}

\author[1]{Virgil Griffith}
\author[2]{Edwin K. P. Chong}
\author[3]{Ryan G. James}
\author[4]{Christopher J. Ellison}
\author[5,6]{James P. Crutchfield}
\affil[1]{Computation and Neural Systems, Caltech, Pasadena, CA 91125\vspace{.1in}}
\affil[2]{Dept.\ of Electrical \& Computer Engineering\authorcr Colorado
State University, Fort Collins, CO 80523 \vspace{.1in}}
\affil[3]{Computer Science Department, University of Colorado, Boulder, CO 80309\vspace{.1in}}
\affil[4]{Center for Complexity and Collective Computation\authorcr University of Wisconsin-Madison, Madison WI 53706\vspace{.1in}}
\affil[5]{Complexity Sciences Center and Physics Dept.\authorcr University of California Davis, Davis, CA 95616\vspace{.1in}}
\affil[6]{Santa Fe Institute, Santa Fe, NM 87501}

\usepackage{float}				
\usepackage{subfig}			

\bibliographystyle{plos2009}

\usepackage{mathenv}
\usepackage{wasysym}


\usepackage{amssymb}
\usepackage{soul}
\usepackage{virgil_env}

\usepackage{subfloat}

\usepackage{nips10submit_e}
\nipsfinalcopy

\definecolor{mygray}{RGB}{153,153,153}
\definecolor{myred}{RGB}{221,77,57}
\definecolor{myblue}{RGB}{68,104,252}
\definecolor{mygreen}{RGB}{86,149,22}

















\newcommand*{\bin}[1]{\texttt{#1}}









\newcommand*{\Ismp}{\ensuremath{\opname{I}_{\opname{smp} \xspace}}}

\newtheorem{lem}{Lemma}

\newcommand*{\Icupe}[2]{\ensuremath{\opname{I}_{\cup}\!\left( #1 \! : \! #2 \right)}}

\newcommand{\Mzero}{$\mathbf{\left(M_0\right)}$\xspace}
\newcommand{\Szero}{$\mathbf{\left(S_0\right)}$\xspace}
\newcommand{\LPzero}{$\mathbf{\left(LP_0\right)}$\xspace}
\newcommand{\Iminb}{\ensuremath{\opI_\mathrm{min}}}
\newcommand{\Ired}{\ensuremath{\opI_\mathrm{red}}}

\newcommand{\Id}{$\mathbf{\left(Id\right)}$\xspace}
\newcommand{\Mone}{$\mathbf{\left(M_1\right)}$\xspace}
\newcommand{\Sone}{$\mathbf{\left(S_1\right)}$\xspace}
\newcommand{\LPone}{$\mathbf{\left(LP_1\right)}$\xspace}

\renewcommand*{\vee}{\curlyvee}
\renewcommand*{\wedge}{\curlywedge}


\title{Intersection Information Based on Common Randomness}

\begin{document}

\maketitle

\begin{abstract}
The introduction of the partial information decomposition generated a
flurry of proposals for defining an intersection information that
quantifies how much of ``the same information'' two or more random variables
specify about a target random variable. As of yet, none is wholly satisfactory. A palatable measure of intersection information would provide a principled way to quantify slippery concepts, such as synergy. Here, we introduce an intersection information measure based on the Gács-Körner common random variable that is the first to satisfy the coveted target monotonicity property. Our measure is imperfect, too, and we suggest directions for improvement.
\end{abstract}


\section{Introduction}

Partial information decomposition (PID) \cite{plw-10} is
an immensely suggestive framework for deepening our understanding of multivariate
interactions, particularly our understanding of informational redundancy and
synergy. In general, one seeks a decomposition of the mutual information
that $n$ predictors $X_1, \ldots, X_n$ convey about a target random variable,
$Y$. The intersection information is a function that calculates the
information that every predictor conveys about the target random
variable; the name draws an analogy with intersections in set theory.
An anti-chain lattice of redundant, unique and synergistic partial
information is then built from the intersection information.

As an intersection information measure,~\cite{plw-10} proposes the quantity:
\begin{equation}
\begin{split}
 \opI_{\min}\left( X_1, \ldots, X_n : Y \right)
 &= \sum_{y \in \mathcal{Y}} \Prob{y} \min_{i \in \{1, \ldots, n\}} \DKL{ \Prob{X_i|y} }{ \Prob{X_i} } \; ,
\end{split}
\label{eq:Imin_DKL}
\end{equation}
where $\opname{D_{KL}}$ is the Kullback--Leibler divergence. Although
$\opI_{\min}$ is a plausible choice for the intersection information, it has
several counterintuitive properties that make it unappealing~\cite{qsmi}. In
particular, $\opI_{\min}$ is not sensitive to the possibility that differing
predictors, $X_i$ and $X_j$, can reduce uncertainty about $Y$ in nonequivalent
ways. Moreover, the $\min$ operator effectively treats all uncertainty
reductions as the same, causing it to overestimate the ideal intersection
information. The search for an improved intersection information measure
ensued and continued through~\cite{polani12,bertschinger12,lizier13}, and today, a
widely accepted intersection information measure remains undiscovered.

Here, we do not definitively solve this problem, but explore a candidate
intersection information based on the so-called common random variable
\cite{gacs73}. Whereas Shannon mutual information is relevant to communication
channels with arbitrarily small error, the entropy of the common random
variable (also known as the zero-error information) is relevant to
communication channels without error~\cite{wolf04}. We begin by proposing a
measure of intersection information for the simpler zero-error information
case. This is useful in and of itself, because it provides a template
for exploring intersection information measures. Then, we modify our proposal,
adapting it to the Shannon mutual information case.

The next section introduces several definitions, some notation and a
necessary lemma. We extend and clarify the desired properties for
intersection information. Section~\ref{sect:Izero} introduces zero-error
information and its intersection information measure.
Section~\ref{sect:Icrv} uses the same methodology to produce a novel candidate
for the Shannon intersection information. Section~\ref{sect:examples}
shows the successes and shortcoming of our candidate intersection information
measure using example circuits and diagnoses the shortcoming's origin.
Section~\ref{sect:negsynergy} discusses the negative values of the resulting
synergy measure and identifies its origin. Section~\ref{sect:conclusion}~
summarizes our progress towards the ideal intersection information measure
and suggests directions for improvement. Appendices are devoted to technical
lemmas and their proofs, to which we refer in the main text.

\section{Preliminaries}
\subsection{Informational Partial Order and Equivalence}

We assume an underlying probability space on which we define random variables denoted by capital letters (e.g., $X$, $Y$ and $Z$). In this paper, we consider only random variables taking values on finite spaces.
Given random variables $X$ and $Y$, we write $X\preceq Y$ to signify that there exists a measurable function, $f$, such that $X=f(Y)$ almost surely ({\em i.e.}, with probability one). In this case, following the terminology in \cite{li11}, we say that $X$ is informationally poorer than $Y$; this induces a partial order on the set of \linebreak random variables.
Similarly, we write $X \succeq Y$ if $Y \preceq X$, in which case we say $X$ is
informationally richer than $Y$.

If $X$ and $Y$ are such that $X\preceq Y$ and $X\succeq Y$, then we write $X\cong Y$.
In this case, again following \cite{li11}, we say that $X$ and $Y$ are informationally equivalent.
In other words, $X \cong Y$ if and only if one can relabel the values of $X$ to obtain a random value that is equal to $Y$ almost surely and \textit{vice versa}.

This ``information-equivalence'' can easily be shown to be an equivalence
relation, and it partitions the set of all random variables into disjoint equivalence classes. The $\preceq$ ordering is invariant within these equivalence classes in the following sense. If $X\preceq Y$ and $Y\cong Z$, then $X\preceq Z$. Similarly, if $X\preceq Y$ and $X\cong Z$, then $Z\preceq Y$. Moreover, within each equivalence class, the entropy is invariant, as shown in~Section~\ref{sec:infolattice}.

\subsection{Information Lattice}
\label{sec:infolattice}

Next, we follow \cite{li11} and consider the join and meet
operators. These operators were defined for
information elements, which are $\sigma$-algebras or,
equivalently, equivalence classes of random variables. We deviate from
\cite{li11} slightly and define the join and meet operators for random
variables.

Given random variables $X$ and $Y$, we define $X\vee Y$ (called the
join of $X$ and $Y$) to be an informationally poorest
(``smallest'' in the sense of the partial order $\preceq$) random variable,
such that $X \preceq X \vee Y$ and $Y \preceq X \vee Y$. In other words, if
$Z$ is such that $X \preceq Z$ and $Y \preceq Z$, then $X \vee Y \preceq Z$.
Note that $X\vee Y$ is unique only up to equivalence with respect to $\cong$.
In other words, $X\vee Y$ does not define a specific, unique random variable.
Nonetheless, standard information-theoretic quantities are invariant over the
set of random variables satisfying the condition specified above. For example,
the entropy of $X\vee Y$ is invariant over the entire equivalence class of
random variables satisfying the condition above. Similarly, the inequality
$Z\preceq X\vee Y$ does not depend on the specific random variable chosen, as
long as it satisfies the condition above. Note, the pair $(X,Y)$ is an
instance of $X\vee Y$.

In a similar vein, given random variables $X$ and $Y$, we define $X\wedge Y$ (called the meet of $X$ and $Y$) to be an informationally richest random variable (``largest'' in the sense of $\succeq$), such that $X\wedge Y\preceq X$ and $X\wedge Y\preceq Y$. In other words, if $Z$ is such that $Z\preceq X$ and $Z\preceq Y$, then $Z\preceq X\wedge Y$. Following \cite{gacs73}, we also call $X\wedge Y$ the common random variable of $X$ and $Y$.

An algorithm for computing an instance of the common random variable between two random variables is provided in \cite{wolf04}; it generalizes straightforwardly to $n$ random variables. One can also take intersections of the $\sigma$-algebras generated by the random variables that define the meet.

The $\vee$ and $\wedge$ operators satisfy the algebraic properties of a lattice \cite{li11}. In particular, the following hold:
\begin{itemize}
\item commutative laws: $X\vee Y \cong Y\vee X$ and
	$X\wedge Y \cong Y\wedge X$;
\item associative laws: $X\vee (Y\vee Z) \cong (X\vee Y)\vee Z$ and
	$X\wedge (Y\wedge Z) \cong (X\wedge Y)\wedge Z$;
\item absorption laws: $X\vee(X\wedge Y) \cong X$ and
	$X\wedge(X\vee Y) \cong X$;
\item idempotent laws: $X\vee X \cong X$ and $X\wedge X \cong X$;
\item generalized absorption laws: if $X\preceq Y$, then $X\vee Y\cong Y$
	and $X\wedge Y \cong X$.
\end{itemize}
Finally, the partial order $\preceq$ is preserved under $\vee$ and $\wedge$, {\em i.e.}, if $X\preceq Y$, then $X\vee Z\preceq Y\vee Z$ and $X\wedge Z\preceq Y\wedge Z$.

Let $\ent{\cdot}$ represent the entropy function and $\ent{\cdot|\cdot}$ the
conditional entropy. We denote the Shannon mutual information between $X$ and
$Y$ by $\info{X}{Y}$. The following results highlight the invariance and
monotonicity of the entropy and conditional entropy functions with respect to
$\cong$ and $\preceq$~\cite{li11}. Given that $X \preceq Y$ if and only if $X =
f(Y)$, these results are familiar in information theory, but are restated here
using the current notation:
\begin{itemize}
\item[(a)] If $X\cong Y$, then $\ent{X}=\ent{Y}$,
$\ent{X\middle|Z}=\ent{Y\middle|Z}$, and $\ent{Z\middle|X} = \ent{Z\middle|Y}$.
\item[(b)] If $X\preceq Y$, then $\ent{X}\leq\ent{Y}$,
$\ent{X\middle|Z}\leq\ent{Y\middle|Z}$, and $\ent{Z\middle|X}\geq\ent{Z\middle|Y}$.
\item[(c)] $X\preceq Y$ if and only if $\ent{X\middle|Y}=0$.
\end{itemize}
\subsection{Desired Properties of Intersection Information}
\label{subsec:prop}

We denote $\Infor{X}{Y}$ as a nonnegative measure of information between
$X$ and $Y$.
For example, $\mathcal{I}$ could be the Shannon mutual information;
{\em i.e.}, $\Infor{X}{Y} \equiv \info{X}{Y}$. Alternatively, we could take
$\mathcal{I}$ to be the zero-error information. Yet, other possibilities
include the Wyner common information \cite{wyner75} or the quantum mutual
information \cite{adami97}. Generally, though, we require that
$\Infor{X}{Y} = 0$ if $Y$ is a constant, which is satisfied by both the
zero-error and Shannon information.

For a given choice of $\mathcal{I}$, we seek a function that captures the
amount of information about $Y$ that is captured by each of the predictors
$X_1, \ldots, X_n$. We say that $\Icap$ is an intersection information for
$\mathcal{I}$ if $\Icape{X}{Y} = \Infor{X}{Y}$. There are
currently 11 intuitive properties that we wish the ideal intersection
information measure, $\Icap$, to satisfy. Some are new (e.g., lower bound \LB, strong monotonicity \Mone, and equivalence-class invariance \Eq),
but most were introduced earlier, in various forms,
in~\cite{plw-10,qsmi,polani12,bertschinger12,lizier13}. They are as follows:

\begin{itemize}
\item\GP Global positivity: $\Icape{X_1,\ldots,X_n}{Y} \geq 0$.
\item\Eq Equivalence-class invariance:
$\Icape{X_1,\ldots,X_n}{Y}$ is invariant under substitution of $X_i$ (for any $i=1,\ldots,n$) or $Y$ by an informationally equivalent random variable.
 \item\TM Target monotonicity: If $Y\preceq Z$, then $\Icape{X_1,\ldots,X_n}{Y} \leq \Icape{X_1,\ldots,X_n}{Z}$.
 \item\Mzero Weak monotonicity: $\Icape{X_1,\ldots,X_n, W}{Y} \leq \Icape{X_1,\ldots,X_n}{Y}$ with equality if there exists a $Z \in \{X_1, \ldots, X_n\}$ such that $Z \preceq W$.

 \item\Szero Weak symmetry:
$\Icape{X_1,\ldots,X_n}{Y}$ is invariant under reordering of $X_1, \ldots, X_n$.
\end{itemize}

The next set of properties relate the intersection information to the chosen
measure of information between $X$ and $Y$.

\begin{itemize}
 \item\LB Lower bound: If $Q\preceq X_i$ for all $i=1,\ldots,n$, then $\Icape{X_1,\ldots,X_n}{Y}\geq \Infor{Q}{Y}$. Note that $X_1 \wedge \cdots \wedge X_n$ is a valid choice for $Q$. Furthermore, given that we require $\Icape{X}{Y} = \Infor{X}{Y}$, it follows that \Mzero implies \LB.

\item\Id Identity: $\Icape{X, Y}{X \vee Y} = \mathcal{I}(X : Y)$.

\item\LPzero Weak local positivity: For $n=2$ predictors, the derived ``partial information'' defined in \cite{plw-10} and described in Section~\ref{sect:examples} are nonnegative. If both \GP and \Mzero are satisfied, as well as
$\Icape{X_1,X_2}{Y} \geq \Infor{X_1}{Y} + \Infor{X_2}{Y} - \Infor{X_1\vee X_2}{Y}$,
then \LPzero is satisfied.

\end{itemize}

Finally, we have the ``strong'' properties:
\begin{itemize}
 \item\Mone Strong monotonicity:
$\Icape{X_1,\ldots,X_n, W}{Y} \leq \Icape{X_1,\ldots,X_n}{Y}$ with equality if there exists $Z \in \{X_1, \ldots, X_{n}, Y\}$ such that $Z \preceq W$.

 \item\Sone Strong symmetry:
$\Icape{X_1,\ldots,X_n}{Y}$ is invariant under reordering of $X_1, \ldots, X_n, Y$.

 \item\LPone Strong local positivity: For all $n$, the derived ``partial information'' defined in \cite{plw-10} is~nonnegative.

\end{itemize}

Properties \LB, \Mone and \Eq are introduced for the first time here.
However, \Eq is satisfied by most information-theoretic quantities and is
implicitly assumed by others. Though absent from our list, it is worthwhile
to also consider continuity and chain rule properties, in
analogy with the mutual~information~\cite{cover-thomas-91,bertschinger12}.

\section{Candidate Intersection Information for Zero-Error Information}
\label{sect:Izero}
\subsection{Zero-Error Information}

Introduced in \cite{wolf04}, the zero-error information, or G\'acs--K\"{o}rner common information, is a stricter variant of Shannon mutual information. Whereas the mutual information, $\info{A}{B}$, quantifies the magnitude of information $A$ conveys about $B$ with an arbitrarily small error $\epsilon > 0$, the zero-error information, denoted $\infozero{A}{B}$, quantifies the magnitude of information $A$ conveys about $B$ with exactly zero error, {\em i.e.}, $\epsilon = 0$. The zero-error information between $A$ and $B$ equals the entropy of the common random variable $A \wedge B$,
\[
 \infozero{A}{B} \equiv \ent{ A \wedge B }.
\]

Zero-error information has several notable properties, but the most salient is that it is nonnegative and bounded by the mutual information,
\[
0 \leq \infozero{A}{B} \leq \info{A}{B}.
\]

\subsection{Intersection Information for Zero-Error Information}

For the zero-error information case ({\em i.e.}, $\mathcal{I}=\opname{I}^{0}$), we
propose the zero-error intersection information
$\Iw^0\!\left( X_1,\ldots,X_n \: Y \right)$ as the maximum zero-error
information, $\infozero{Q}{Y}$, that a random variable, $Q$, conveys about $Y$,
subject to $Q$ being a function of each predictor $X_1,\ldots,X_n$:
\begin{equation}
\label{eq:Icapzero}
\begin{split}
 \Iw^0\!\left(X_1, \ldots, X_n\!:\!Y\right) &\equiv \max_{ \Pr(Q|Y) } \infozero{Q}{Y} \\
 & \hspace{0.2in} \textnormal{subject to } Q \preceq X_i \ \forall i \in \{1, \ldots, n\} \; .
\end{split}
\end{equation}

In Lemma~\ref{lem:Iw0} of Appendix~\ref{appendix:miscproofs}, it is shown
that the common random variable across all predictors is the maximizing $Q$.
This simplifies Equation~\eqref{eq:Icapzero} to:
\begin{equation}
\label{eq:zerocap}
\Iw^0\!\left(X_1, \ldots, X_n\!:\!Y\right)
 = \infozero{ X_1 \wedge \cdots \wedge X_n }{Y}
 = \ent{ X_1 \wedge \cdots \wedge X_n \wedge Y } \; .
\end{equation}

Most importantly, the zero-error information
$\Iw^0\!\left(X_1,\ldots,X_n \: Y \right)$ satisfies
nine of the 11 desired properties from Section~\ref{subsec:prop},
leaving only \LPzero and \LPone unsatisfied. See
Lemmas~\ref{lem:Iw0prop0}, \ref{lem:Iw0propInfor}, and \ref{lem:Iw0prop1}
in Appendix~\ref{app:Iw0} for details.

\section{Candidate Intersection Information for Shannon Information}
\label{sect:Icrv}

In the last section, we defined an intersection information for zero-error
information that satisfies the vast majority of the desired properties. This is a solid start, but an intersection information for Shannon mutual information remains the goal. Towards this end, we use the same method as in Equation~\eqref{eq:Icapzero}, leading to $\Iw$, our candidate intersection information measure for Shannon mutual information:
\begin{equation}
\label{eq:crvcap}
\begin{split}
 \Iw(X_1, \ldots, X_n:Y) &\equiv \max_{ \Pr(Q|Y) } \info{Q}{Y} \\
 &\hspace{0.2in} \textnormal{subject to } Q \preceq X_i \ \forall i \in \{1, \ldots, n\} \; .
\end{split}
\end{equation}

In Lemma~\ref{lem:Iw} of Appendix~\ref{appendix:miscproofs}, it is shown that
Equation~\eqref{eq:crvcap} simplifies to:
\begin{equation}
 \Iw(X_1, \ldots, X_n:Y) = \info{ X_1 \wedge \cdots \wedge X_n }{Y} \; .
\end{equation}

Unfortunately, $\Iw$ does not satisfy as many of the desired properties as
$\Iw^0$. However, our candidate, $\Iw$, still satisfies seven of the 11
properties; most importantly, the coveted \TM that, until now, had not been
satisfied by any proposed measure. See Lemmas~\ref{lem:Iwprop0},
\ref{lem:IwpropInfor} and \ref{lem:Iwprop1} in Appendix~\ref{app:Iw} for
details. Table~\ref{tbl:sometable} lists the desired properties satisfied by $\opI_{\min}$, $\Iw$ and $\Iw^0$. For reference, we also include $\Ired$, the proposed measure
from~\cite{polani12}.

\begin{table}[H]
\small
\centering
\begin{tabular}{ l l l l l l }
\toprule
\multicolumn{2}{c}{\textbf{Property}}    & \textbf{\Iminb} & \textbf{\Ired} & \textbf{$\Iw$} & \textbf{$\Iw^0$ } \\
\midrule
 \GP & Global Positivity  & \checkmark & \checkmark & \checkmark & \checkmark \\
\addlinespace
 \Eq & Equivalence-Class Invariance & \checkmark & \checkmark & \checkmark & \checkmark \\
\addlinespace
 \TM & Target Monotonicity  &  &  & \checkmark & \checkmark \\
\addlinespace
 \Mzero & Weak Monotonicity  & \checkmark &  & \checkmark & \checkmark \\
\addlinespace
 \Szero & Weak Symmetry   & \checkmark & \checkmark & \checkmark & \checkmark \\
\addlinespace
 \LB & Lower bound    & \checkmark & \checkmark & \checkmark & \checkmark \\
\addlinespace
 \Id & Identity    &  & \checkmark &  & \checkmark \\
\addlinespace
 \LPzero & Weak Local Positivity & \checkmark & \checkmark &  &  \\
\addlinespace
 \Mone & Strong Monotonicity  &  &  &  & \checkmark \\
\addlinespace
 \Sone & Strong Symmetry   &  &  &  & \checkmark \\
\addlinespace
 \LPone & Strong Local Positivity & \checkmark & &  &  \\
\bottomrule
\end{tabular}
\caption{The $\Icap$ desired properties that each measure satisfies.
 (The appendices provide proofs for $\Iw$ and $\Iw^0$.)}
\label{tbl:sometable}
\end{table}

Lemma~\ref{lem:IwleqImin} in Appendix~\ref{appendix:miscproofs} allows a
comparison of the three subject intersection information measures:
\begin{equation}
 0 \leq \Iw^0\!\left( X_1,\ldots,X_n : Y\right) \leq \Iw\!\left(X_1,\ldots,X_n:Y\right) \leq \opI_{\min}\left( X_1,\ldots,X_n : Y \right) \; .
\end{equation}

Despite not satisfying \LPzero, $\Iw$ remains an important stepping-stone towards the ideal Shannon $\Icap$. First, $\Iw$ captures what is inarguably redundant information (the common random variable); this makes $\Iw$ necessarily a lower bound on any reasonable redundancy measure. Second, it is the first proposal to satisfy target monotonicity. Lastly, $\Iw$ is the first measure to reach intuitive answers in many canonical situations, while also
being generalizable to an arbitrary number of inputs.

\section{Three Examples Comparing $\opI_{\min}$ and $\Iw$}
\label{sect:examples}

Example \textsc{Unq} illustrates how $\opI_{\min}$ gives undesirable (some
claim fatally so \cite{qsmi}) decompositions of redundant and synergistic information.
Examples \textsc{Unq} and \textsc{RdnXor} illustrate $\Iw$'s successes and
example \textsc{ImperfectRdn} illustrates $\Iw$'s paramount deficiency. For
each, we give the joint distribution $\Prob{x_1, x_2, y}$, a diagram and the decomposition derived from setting $\opI_{\min}$ or $\Iw$ as the $\Icap$ measure.
At each lattice junction, the left number is the $\Icap$ value of that node, and the number in parentheses is the $\opI_{\partial}$ value (this is the same notation used in \cite{bertschinger12}). Readers unfamiliar with the $n=2$ partial information lattice should consult \cite{plw-10}, but in short, $\opI_{\partial}$ measures the magnitude of ``new'' information at this node in the lattice beyond the nodes lower in the lattice. Specifically, the mutual information between the pair, $X_1 \vee X_2$ and $Y$, decomposes into four terms:
\begin{equation*}
\mathcal{I}(X_1 \vee X_2 : Y) =
	I_\partial(X_1, X_2 : Y) + I_\partial(X_1 : Y)
	+ I_\partial(X_2 : Y) + I_\partial(X_1 \vee X_2 : Y) ~.
\end{equation*}
In order, the terms are given by the redundant information that $X_1$ and $X_2$ both provide
to $Y$, the unique information that $X_1$ provides to $Y$, the unique information
that $X_2$ provides to $Y$ and finally, the synergistic information that
$X_1$ and $X_2$ jointly convey about $Y$. Each of these quantities can be written in terms of standard mutual information and the intersection information, $\Icap$, as follows:
\begin{equation}
\begin{split}
	\opI_{\partial}( X_1 , X_2: Y ) &= \Icape{X_1,X_2}{Y} \\
	\opI_{\partial}( X_1 : Y ) &= \mathcal{I}(X_1 : Y) - \Icape{X_1,X_2}{Y} \\
	\opI_{\partial}( X_2 : Y ) &= \mathcal{I}(X_2 : Y) - \Icape{X_1,X_2}{Y} \\
    \opI_{\partial}( X_1 \vee X_2 : Y ) &= \mathcal{I}(X_1 \vee X_2: Y) - \mathcal{I}(X_1 : Y) - \mathcal{I}(X_2 : Y)+ \Icape{X_1,X_2}{Y}
\end{split}
\label{eq:partialinfos}
\end{equation}
These quantities occupy the bottom, left, right and top nodes in the lattice
diagrams, respectively. Except for \textsc{ImperfectRdn}, measures $\Iw$ and
$\Iw^0$ reach the same decomposition for all presented examples.

\subsection{Example \textsc{Unq} (\figref{fig:UNQ})}

The desired decomposition for example \textsc{Unq} is two bits of unique information; $X_1$ uniquely specifies one bit of $Y$, and $X_2$ uniquely specifies the other bit of $Y$. The chief criticism of $\opI_{\min}$ in \cite{qsmi} was that $\opI_{\min}$ calculated one bit of redundancy and one bit of synergy for \textsc{Unq} (Figure 1c). We see that unlike $\opI_{\min}$, $\Iw$ satisfyingly arrives at two bits of unique information. This is easily seen by the~inequality,
\begin{equation}
 0 \leq \Iwe{X_1, X_2}{Y} \leq \ent{ X_1 \wedge X_2 } \leq \info{X_1}{X_2} = 0 \textnormal{ bits} \; .
\end{equation}
Therefore, as $\info{X_1}{X_2}=0$, we have $\Iwe{X_1,X_2}{Y}=0$ bits leading to $\opI_{\partial}( X_1 : Y) = 1$ bit and $\opI_{\partial}( X_2:Y) = 1$ bit (Figure 1d).

\begin{figure}[H]
 \centering
 \subfloat[]{
 \label{fig:exampleUa}
 \begin{minipage}[c]{0.4\linewidth}
  \centering
  \begin{tabular}{ c | c c} \cmidrule(r){1-2}
   $X_1$ $X_2$ &$Y$ \\
   \cmidrule(r){1-2}
   \bin{a b} & \bin{ab} & \quad \nicefrac{1}{4}\\
   \bin{a B} & \bin{aB} & \quad \nicefrac{1}{4}\\
   \bin{A b} & \bin{Ab} & \quad \nicefrac{1}{4}\\
   \bin{A B} & \bin{AB} & \quad \nicefrac{1}{4}\\
   \cmidrule(r){1-2}
  \end{tabular}
 \end{minipage}
 \begin{minipage}[c]{.4\linewidth}
  \begin{align*}
   \info{X_1\vee X_2}{Y} &= 2 \\
   \info{X_1}{Y} &= 1 \\
   \info{X_2}{Y} &= 1 \\
   \addlinespace
   \Iminn{X_1,X_2}{Y} &= 1 \\
   \Iwe{X_1,X_2}{Y} &= 0
  \end{align*}
 \end{minipage}
 }\\
 \subfloat[]{%
 \label{fig:exampleUb}
 \begin{minipage}[c]{0.4\linewidth}
  \centering\vspace{.1in}
  \includegraphics[height=1.6in]{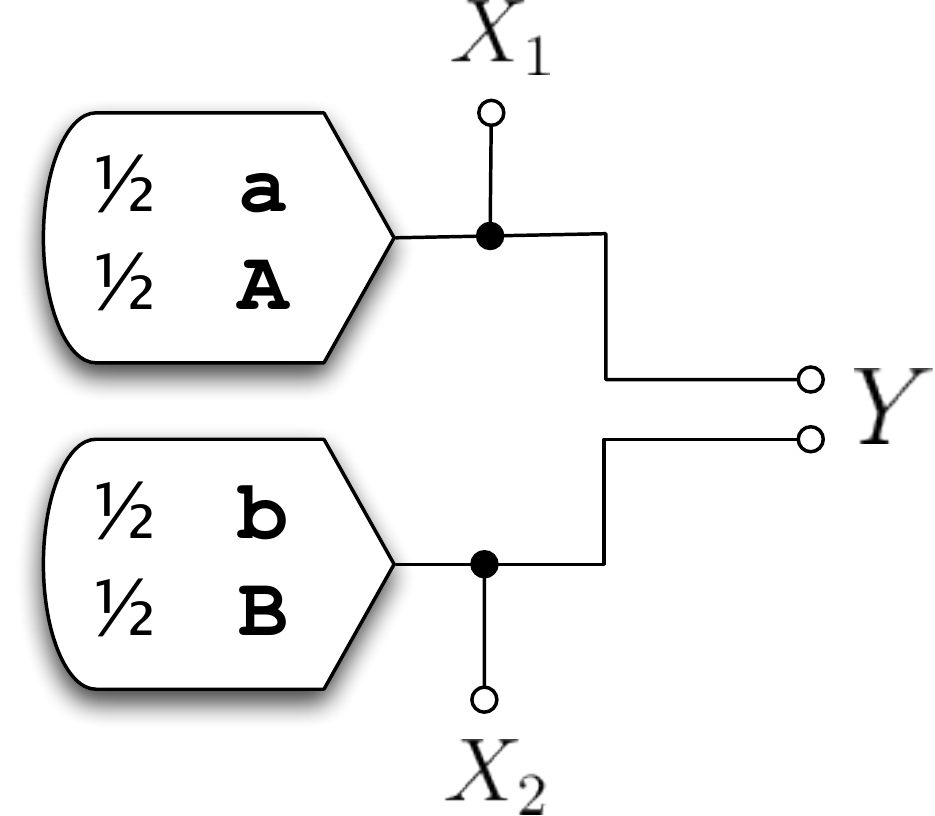}\\
  $\phantom{blah}$
 \end{minipage}
 }\\
 \subfloat[]{
 \label{fig:UNQc}
 \begin{minipage}[c]{0.4\linewidth}
  \centering
  \includegraphics[width=1.5in]{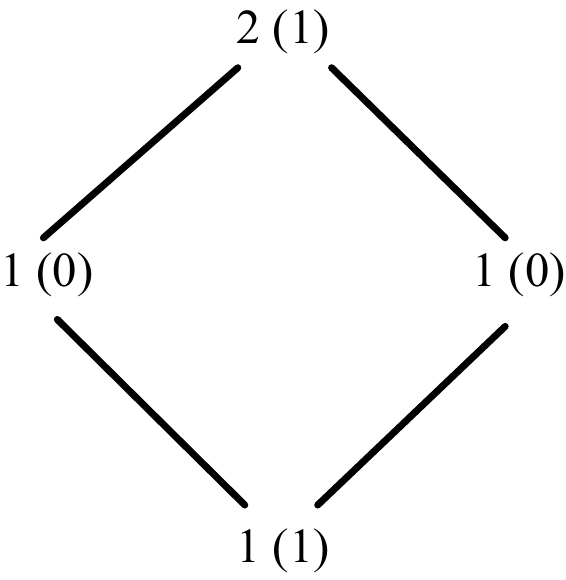}\\
  $\phantom{blah}$
 \end{minipage}
 }
 \subfloat[]{
 \label{fig:UNQd}
 \begin{minipage}[c]{0.4\linewidth}
  \centering
   \includegraphics[height=1.5in]{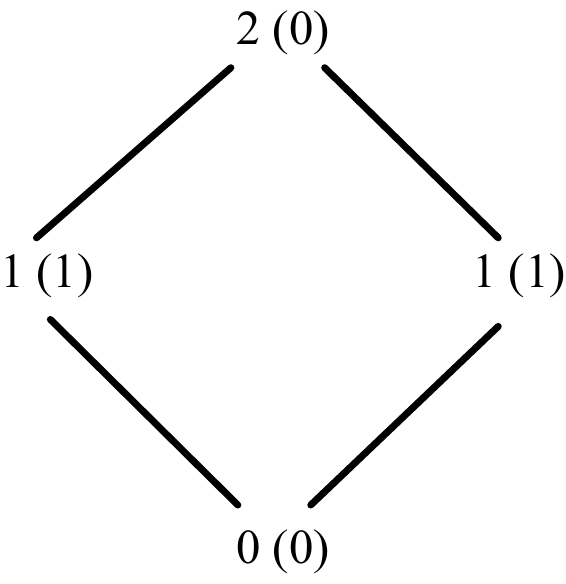}\\
  $\phantom{blah}$
 \end{minipage}
 }
 \caption{Example \textsc{Unq}. This is the canonical example of unique information. $X_1$ and $X_2$ each uniquely specify a single bit of $Y$. This is the simplest example, where $\opI_{\min}$ calculates an undesirable decomposition (c) of one bit of redundancy and one bit of synergy. $\Iw$ and $\Iw^0$ each calculate the desired decomposition (d). (\textbf{a}) Distribution and information quantities; (\textbf{b})~circuit diagram; (\textbf{c}) $\opI_{\min}$; (\textbf{d}) $\Iw$ and $\Iw^0$.}
 \label{fig:UNQ}
\end{figure}

\subsection{Example \textsc{RdnXor} (\figref{fig:RdnXor})}

In \cite{qsmi}, \textsc{RdnXor} was an example where $\opI_{\min}$ shined by reaching the desired decomposition of one bit of redundancy and one bit of synergy. We see that $\Iw$ finds this same answer. $\Iw$ extracts the common random variable within $X_1$ and
$X_2$---the \bin{r}/\bin{R} bit---and calculates the mutual information between the common random variable and $Y$ to arrive at $\Iwe{X_1,X_2}{Y}=1$ bit.

\subsection{Example \textsc{ImperfectRdn} (\figref{fig:ImperfectRdn})}

\textsc{ImperfectRdn} highlights the foremost shortcoming of $\Iw$: It does not
detect ``imperfect'' or ``lossy'' correlations between $X_1$ and $X_2$. Given
\LPzero, we can determine the desired decomposition analytically. First,
$\info{X_1\vee X_2}{Y} = \info{X_1}{Y} = 1$ bit, and thus, $\info{X_2}{Y|X_1} =
0$ bits. Since the conditional mutual information is the sum of the synergy
$\opI_{\partial}\!\left(X_1, X_2 \:Y\right)$ and unique information
$\opI_{\partial}\!\left( X_2 \:Y\right)$, both quantities must also be zero.
Then, the redundant information $\opI_{\partial}\!\left( X_1, X_2 \:Y\right) =
\info{X_2}{Y} - \opI_{\partial}( X_2 : Y ) = \info{X_2}{Y} = 0.99$ bits.
Having determined three of the partial informations, we compute the final
unique information:
$\opI_{\partial}\!\left( X_1 \:Y\right)=\info{X_1}{Y} - 0.99 = 0.01$ bits.

\begin{figure}[H]
 \centering
 \subfloat[]{
 \begin{minipage}[c]{0.4\linewidth}\centering
 \begin{tabular}{ c | c c }
  \cmidrule(r){1-2}
  $X_1$ $X_2$ & $Y$ \\
  \cmidrule(r){1-2}
  \bin{r0 r0} & \bin{r0} & \quad \nicefrac{1}{8}\\
  \bin{r0 r1} & \bin{r1} & \quad \nicefrac{1}{8}\\
  \bin{r1 r0} & \bin{r1} & \quad \nicefrac{1}{8}\\
  \bin{r1 r1} & \bin{r0} & \quad \nicefrac{1}{8}\\
  \addlinespace
  \bin{R0 R0} & \bin{R0} & \quad \nicefrac{1}{8}\\
  \bin{R0 R1} & \bin{R1} & \quad \nicefrac{1}{8}\\
  \bin{R1 R0} & \bin{R1} & \quad \nicefrac{1}{8}\\
  \bin{R1 R1} & \bin{R0} & \quad \nicefrac{1}{8}\\
  \cmidrule(r){1-2}
 \end{tabular}
 \end{minipage}
 \begin{minipage}[c]{0.4\linewidth} \centering
  \begin{align*}
   \info{X_1\vee X_2}{Y} &= 2 \\
   \info{X_1}{Y} &= 1 \\
   \info{X_2}{Y} &= 1 \\
   \addlinespace
   \Iminn{X_1,X_2}{Y} &= 1 \\
   \Iwe{X_1,X_2}{Y} &= 1
  \end{align*}
 \end{minipage}
 }\\
 \subfloat[]{
 \begin{minipage}[c]{0.3\linewidth}
  \centering\vspace{.1in}
  \includegraphics[width=2.4in]{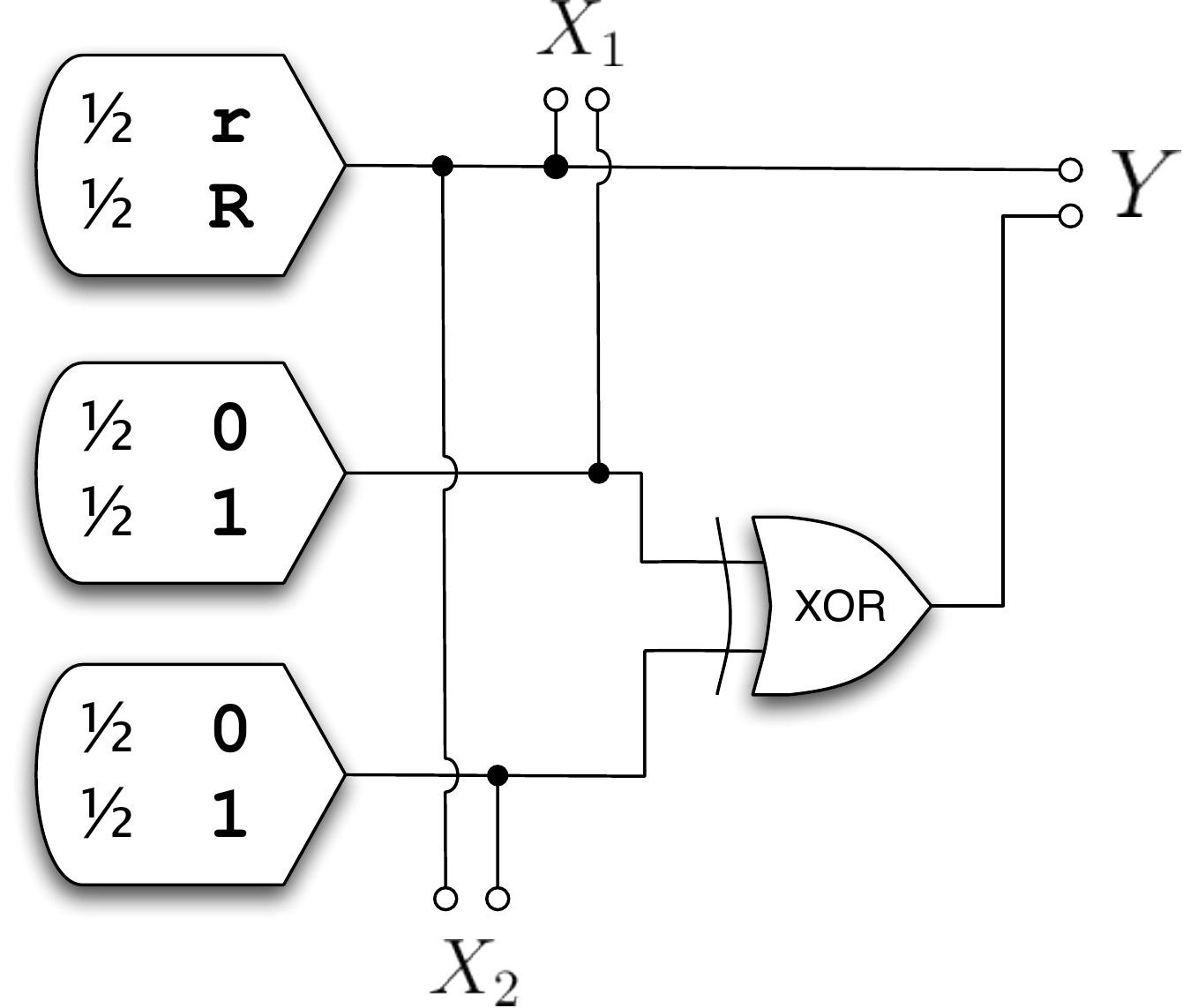}\\
  $\phantom{blah}$
 \end{minipage}
 }
 \\[1.5em]
 \subfloat[]{
 \begin{minipage}[c]{0.4\linewidth}
  \centering
  \includegraphics[width=1.5in]{PID2-RDN_XOR.pdf}\\
  $\phantom{blah}$
 \end{minipage}
 }
 \subfloat[]{
 \begin{minipage}[c]{0.4\linewidth}
  \centering
  \includegraphics[width=1.5in]{PID2-RDN_XOR.pdf}\\
  $\phantom{blah}$
 \end{minipage}
 }
 \caption{Example \textsc{RdnXor}. This is the canonical example of redundancy and synergy coexisting. $\opI_{\min}$ and $\Iw$ each reach the desired decomposition of one bit of redundancy and one bit of synergy. This is the simplest example demonstrating $\Iw$ and $\Iw^0$ correctly extracting the embedded redundant bit within $X_1$ and $X_2$. (\textbf{a}) Distribution and information quantities; (\textbf{b}) circuit diagram; (\textbf{c}) $\opI_{\min}$; (\textbf{d}) $\Iw$ and $\Iw^0$.}
 \label{fig:RdnXor}
\end{figure}

\begin{figure}[H]
 \centering
 \subfloat[]{
 \begin{minipage}[c]{0.4\linewidth}
 \centering
 \begin{tabular}{ c | c c }
  \cmidrule(r){1-2}
  $\ \, X_1 \, X_2$ &$Y$ \\
  \cmidrule(r){1-2}
  \bin{0 0} & \bin{0} & \quad $0.499$\\
  \bin{0 1} & \bin{0} & \quad $0.001$\\
  \bin{1 1} & \bin{1} & \quad $0.500$\\
  \cmidrule(r){1-2}
 \end{tabular}
 \end{minipage}
 \begin{minipage}[c]{0.4\linewidth}
 \centering
 \begin{align*}
  \info{X_1\vee X_2}{Y} &= 1 \\
  \info{X_1}{Y} &= 1 \\
  \info{X_2}{Y} &= 0.99 \\
  \addlinespace
  \Iminn{X_1,X_2}{Y} &= 0.99 \\
  \Iwe{X_1,X_2}{Y} &= 0
 \end{align*}
 \end{minipage}
 \label{fig:ANDa}
 }
 \\[1.5em]
 \subfloat[]{
 \begin{minipage}[c]{0.4\linewidth}
  \centering
  \includegraphics[width=2.3in]{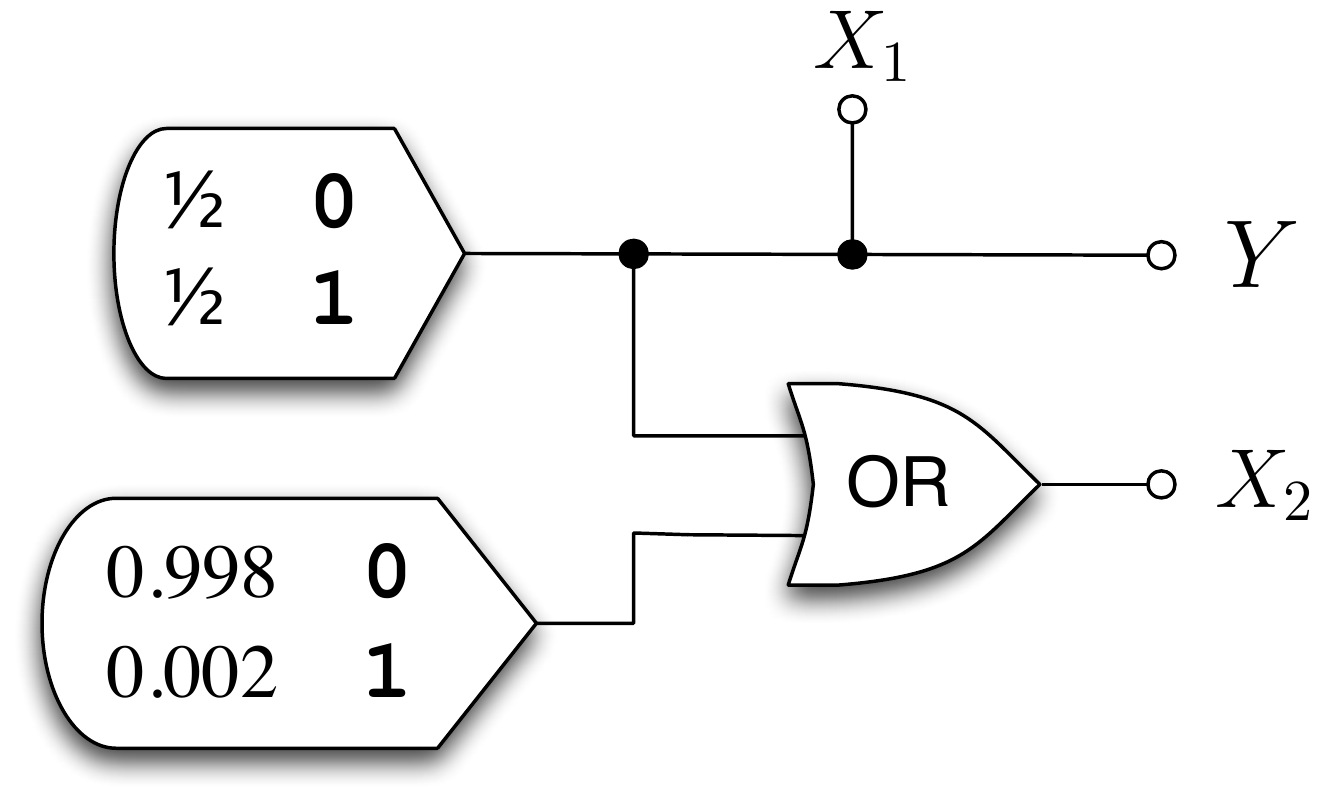}
 \end{minipage}
 }
 \\[1.5em]
 \subfloat[]{
 \begin{minipage}[c]{0.3\linewidth}
  \centering
  \includegraphics[height=1.4in]{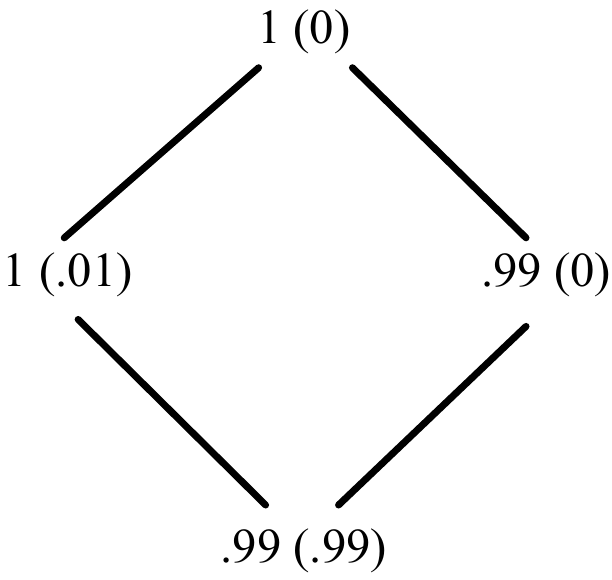}\\
  $\phantom{blah}$
 \end{minipage}
 \label{fig:ImperfectRdn_Imin}
 }
 \subfloat[]{
 \begin{minipage}[c]{0.3\linewidth}
  \centering
  \includegraphics[height=1.4in]{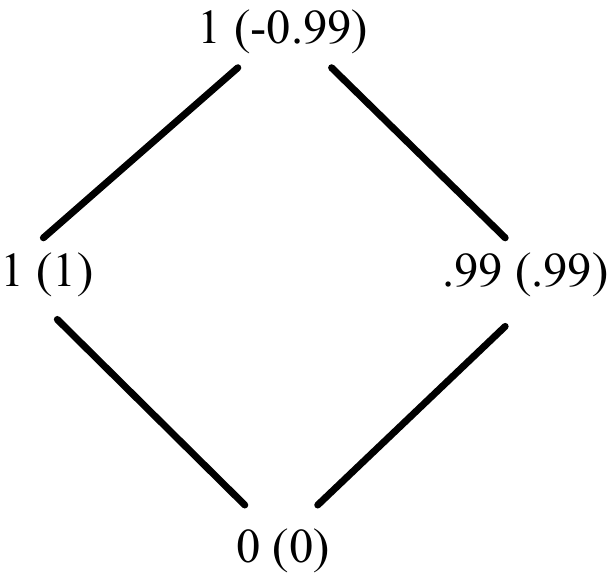}\\
  $\phantom{blah}$
 \end{minipage}
 \label{fig:ImperfectRdn_Iw}
 }
 \subfloat[]{
 \begin{minipage}[c]{0.3\linewidth}
  \centering
  \includegraphics[height=1.4in]{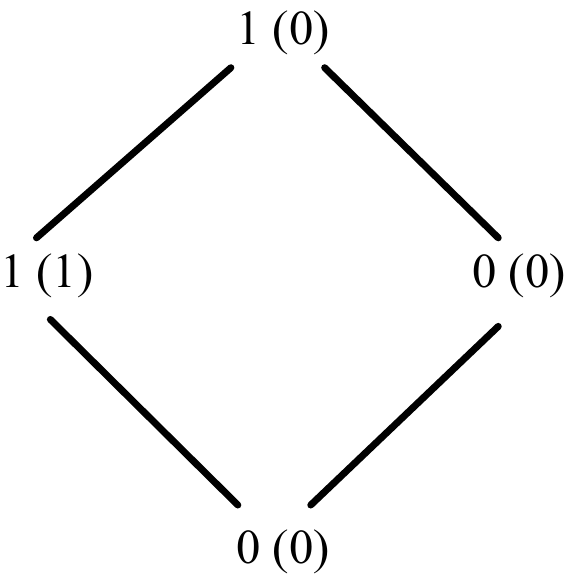}\\
  $\phantom{blah}$
 \end{minipage}
 \label{fig:ImperfectRdn_ZE}
 }
 \caption{Example \textsc{ImperfectRdn}. $\Iw$ is blind to the noisy correlation between $X_1$ and $X_2$ and calculates zero redundant information. An ideal $\Icap$ measure would detect that all \protect\linebreak of the information $X_2$ specifies about $Y$ is also specified by $X_1$ to calculate $\Icape{X_1, X_2}{Y} = 0.99$ bits. (\textbf{a}) Distribution and information quantities; (\textbf{b}) circuit diagram; (\textbf{c}) $\opI_{\min}$; (\textbf{d}) $\Iw$; \protect\linebreak(\textbf{e}) $\Iw^0$.}
 \label{fig:ImperfectRdn}
\end{figure}

How well do $\opI_{\min}$ and $\Iw$ match the desired decomposition of \textsc{ImperfectRdn}? We see that $\opI_{\min}$ calculates the desired decomposition (Figure 3c); however, $\Iw$ does not (Figure 3d). Instead, $\Iw$ calculates zero redundant information, that $\Icape{X_1,X_2}{Y}=0$ bits. This unpleasant answer arises from $\Prob{X_1 = \bin{0}, X_2 = \bin{1}} > 0$. If this were zero, then both $\Iw$ and $\opI_{\min}$ reach the desired one bit of redundant information. Due to the nature of the common random variable, $\Iw$ only sees the ``deterministic'' correlations between $X_1$ and $X_2$; add even an iota of noise between $X_1$ and $X_2$, and $\Iw$ plummets to zero. This highlights the fact that $\Iw$ is not continuous: an arbitrarily small change in the probability distribution can result in a discontinuous jump in the value of $\Iw$. As with traditional information measures, such as the entropy and the mutual information, it may be desirable to have an $\Icap$ measure that is continuous over the simplex.

To summarize, \textsc{ImperfectRdn} shows that when there are additional ``imperfect'' correlations between $A$ and $B$, {\em i.e.}, $\info{A}{B\middle|A \wedge B} > 0$, $\Iw$ sometimes underestimates the ideal $\Icape{A,B}{Y}$.

\section{Negative Synergy}
\label{sect:negsynergy}

In \textsc{ImperfectRdn}, we saw $\Iw$ calculate a synergy of $-0.99$ bits (Figure 3d). What does this mean? Could negative synergy be a ``real'' property of Shannon information? When $n=2$, it is fairly easy to diagnose the cause of negative synergy from the equation for $\opI_{\partial}(X_1 \vee X_2 : Y)$ in Equation~\eqref{eq:partialinfos}. Given \GP, negative synergy occurs if and only if,
\begin{equation}
 \info{X_1 \vee X_2}{Y} < \info{X_1}{Y} + \info{X_2}{Y} - \Icape{X_1,X_2}{Y}
 = \Icupe{X_1,X_2}{Y},
\label{eq:negsyn}
\end{equation}
where $\Icup$ is dual to $\Icap$ and related by the inclusion-exclusion principle.
For arbitrary $n$, this is $\Icup\!\left( X_1, \ldots, X_n : Y \right) \equiv \sum_{\mathbf{S} \subseteq \{X_1, \ldots, X_n\} } (-1)^{\left|\mathbf{S}\right|+1} \Icape{ S_1, \ldots, S_{|\mathbf{S}|}}{Y}$. The intuition behind $\Icup$ is that it
represents the aggregate information contributed by the sources, $X_1, \ldots, X_n$,
without considering synergies or double-counting redundancies.

From Equation~\eqref{eq:negsyn}, we see that negative synergy occurs when $\Icap$ is
small, probably too small. Equivalently, negative synergy occurs when the
joint random variable conveys less about $Y$ than the sources, $X_1$ and $X_2$,
convey separately; mathematically, when $\info{X_1 \vee X_2}{Y} <
\opI_{\cup}(X_1, X_2 : Y )$. On the face of it, this sounds strange. No
structure ``disappears'' after $X_1$ and $X_2$ are combined by the $\vee$ operator. By the definition of $\vee$, there are always functions $f_1$ and
$f_2$, such that $X_1 \cong f_1( Z )$ and $X_2 \cong f_2( Z )$. Therefore, if
your favorite $\Icap$ measure does not satisfy \LPzero, it is too strict.

This means that our measure, $\Iw^0$, does not account for the full zero-information overlap between $\infozero{X_1}{Y}$ and $\infozero{X_2}{Y}$. This is shown in the example, \textsc{Subtle} (\figref{fig:Subtle}), where $\Iw^0$ calculates a synergy of $-0.252$ bits. Defining a zero-error, $\Icap$, that satisfies \LPzero is a matter of ongoing research.

\section{Conclusions and Path Forward}
\label{sect:conclusion}

We made incremental progress on several fronts towards the ideal Shannon $\Icap$.

\subsection{Desired Properties}

We have expanded, tightened and grounded the desired properties for $\Icap$. Particularly,

\begin{itemize}
 \item \LB highlights an uncontentious, yet tighter lower bound on $\Icap$ than \GP.
 \item Inspired by $\Icape{X_1}{Y}=\Infor{X_1}{Y}$ and \Mzero synergistically implying \LB, we introduced \Mone as a desired property.
 \item What was before an implicit assumption, we introduced \Eq to better ground one's thinking.
\end{itemize}

\subsection{A New Measure}

 Based on the Gács--Körner common random variable, we introduced a new Shannon $\Icap$ measure. Our measure, $\Iw$, is theoretically principled and the first to satisfy \TM. A point to keep in mind is that our intersection information is zero whenever the distribution $\Prob{x_1, x_2, y}$ has full support; this dependence on structural zeros is inherited from the common random variable.

\begin{figure}[H]
 \centering
 \subfloat[]{
 \begin{minipage}[c]{0.4\linewidth}
  \centering
  \begin{tabular}{ c | c c }
   \cmidrule(r){1-2}
   $\ \, X_1 \, X_2$ &$Y$ \\
   \cmidrule(r){1-2}
   \bin{0 0} & \bin{00} & \quad \nicefrac{1}{3}\\
   \bin{0 1} & \bin{01} & \quad \nicefrac{1}{3}\\
   \bin{1 1} & \bin{11} & \quad \nicefrac{1}{3}\\
   \cmidrule(r){1-2}
  \end{tabular}
 \end{minipage}
 \label{fig:NEGSYN}
 \begin{minipage}[c]{0.4\linewidth}
  \centering
  \begin{align*}
   \info{X_1\vee X_2}{Y} &= 1.585 \\
   \info{X_1}{Y} &= 0.918 \\
   \info{X_2}{Y} &= 0.918 \\
   \info{X_1}{X_2} &= 0.252\\
   \addlinespace
   \Iminn{X_1,X_2}{Y} &= 0.585 \\
   \Iwe{X_1,X_2}{Y} &= 0.0
  \end{align*}
 \end{minipage}
 }
 \\[1.5em]
 \subfloat[]{
 \begin{minipage}[c]{0.4\linewidth}
  \centering
  \includegraphics[width=2.8in]{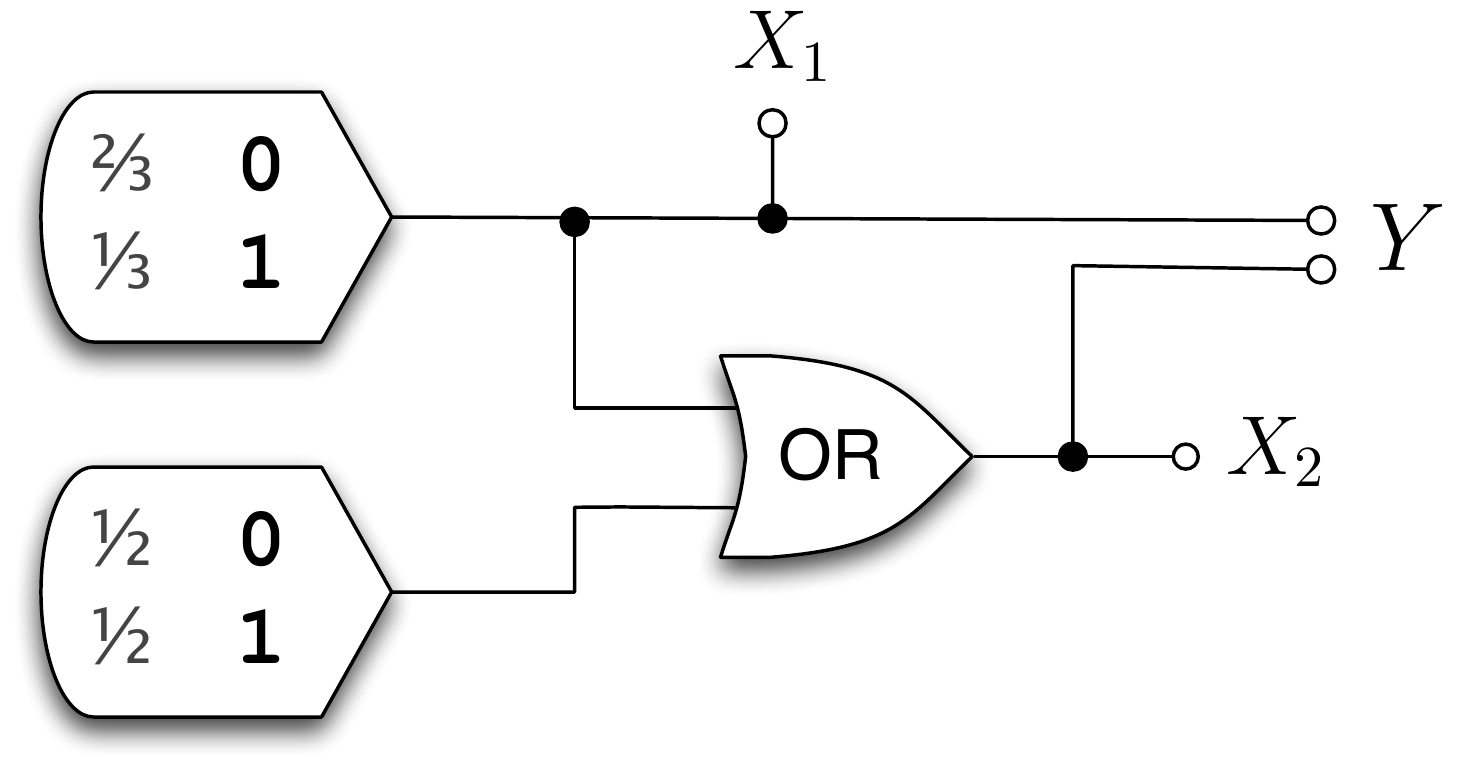}
 \end{minipage}
 \label{fig:Subtle_circuit}
 }
 \\[1.5em]
 \subfloat[]{
 \begin{minipage}[c]{0.4\linewidth}
  \centering
  \includegraphics[width=1.65in]{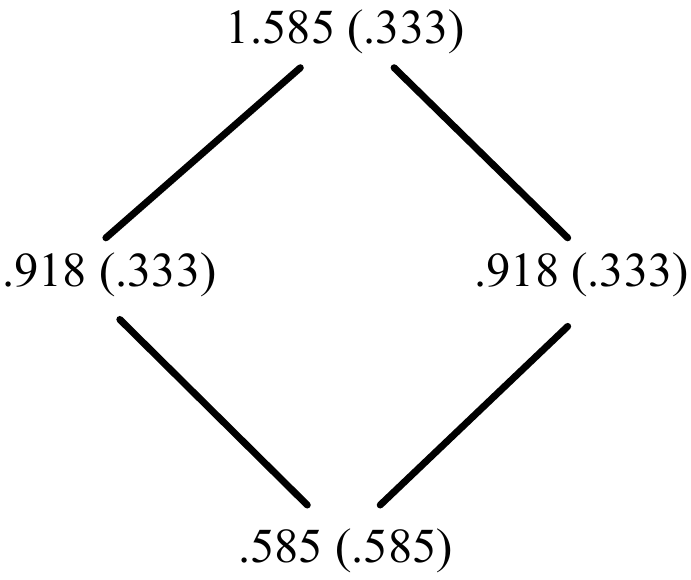}\\
  $\phantom{blah}$
 \end{minipage}
 }
 \subfloat[]{
 \begin{minipage}[c]{0.4\linewidth}
  \centering
  \includegraphics[width=1.65in]{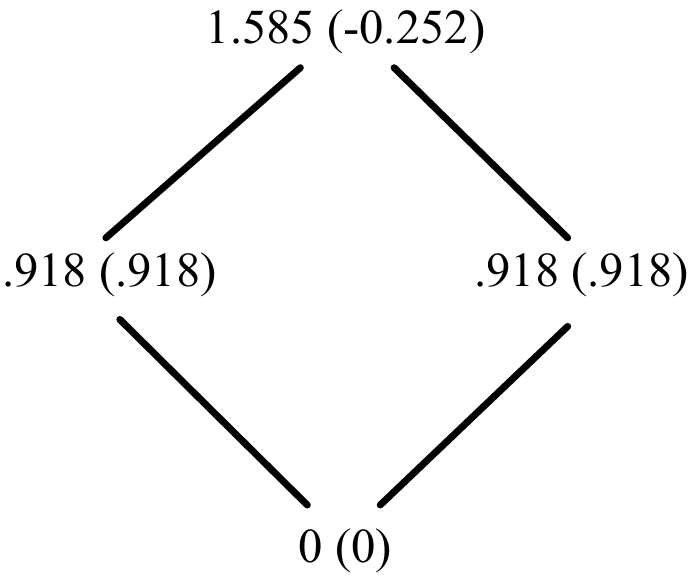}\\
  $\phantom{blah}$
 \end{minipage}
 }
 \caption{Example \textsc{Subtle}. In this example, both $\Iw$ and $\Iw^0$ calculate a synergy of $-0.252$ bits of synergy. What kind of redundancy must be captured for a nonnegative decomposition for this example? (\textbf{a}) Distribution and information quantities; (\textbf{b}) circuit diagram; (\textbf{c}) $\opI_{\min}$; \protect\linebreak (\textbf{d}) $\Iw$ and $\Iw^0$.}
 \label{fig:Subtle}
\end{figure}

\subsection{How to Improve}

 We identified where $\Iw$ fails; it does not detect ``imperfect'' correlations between $X_1$ and $X_2$. One next step is to develop a less stringent $\Icap$ measure that satisfies \LPzero for \textsc{ImperfectRdn}, while still satisfying \TM. Satisfying continuity would also be a good next step.

Contrary to our initial expectation, \textsc{Subtle}, showed that $\Iw^0$ does not satisfy \LPzero. This matches a result from~\cite{bertschinger12}, which shows that
\LPzero, \Sone, \Mzero and \Id cannot all be simultaneously satisfied, and it suggests that $\Iw^0$ is too strict. Therefore, what kind of zero-error informational overlap is $\Iw^0$ not capturing? The answer is of paramount importance. The next step is to formalize what exactly is required for a zero-error $\Icap$ to satisfy \LPzero. From \textsc{Subtle}, we can likewise see that within zero-error information, \Id and \LPzero are incompatible.

{\textbf Acknowledgments} Virgil Griffith thanks Tracey Ho, and Edwin K. P. Chong thanks Hua Li for valuable discussions. While intrepidly pulling back the veil of ignorance, Virgil Griffith was funded by a Department of Energy Computational Science Graduate Fellowship; Edwin K. P. Chong was funded by Colorado State University's Information Science \& Technology Center; Ryan G. James and James P. Crutchfield were funded by Army Research Office grant W911NF-12-1-0234; Christopher J. Ellison was funded by a subaward from the Santa Fe Institute under a grant from the John Templeton Foundation.


\section*{\noindent Appendix}
\vspace{12pt}
\appendix
By and large, most of these proofs follow directly from the lattice properties
and also from the invariance and monotonicity properties with respect to
$\cong$ and $\preceq$.

\vspace{12pt}
\noindent{\em A. Properties of $\Iw^0$}
\label{app:Iw0}
\vspace{12pt}

\begin{lem}\label{lem:Iw0prop0}
$\Iw^0\!\left(X_1,\ldots,X_n \: Y \right)$ satisfies
\GP,
\Eq,
\TM,
\Mzero, and
\Szero, but not
\LPzero.
\end{lem}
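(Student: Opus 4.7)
The plan is to use the simplification $\Iw^0(X_1,\ldots,X_n:Y) = \ent{X_1\wedge\cdots\wedge X_n\wedge Y}$ already established in Equation~\eqref{eq:zerocap}, so that every property reduces to a statement about the meet operator and Shannon entropy. Each of the five positive properties then follows from at most one lattice identity plus the invariance/monotonicity facts (a)--(c) recorded in Section~\ref{sec:infolattice}.

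Concretely, I would dispatch them in this order. For \GP, the entropy of any random variable is nonnegative, so $\Iw^0\geq 0$ is immediate. For \Eq, note that $\wedge$ is defined only up to $\cong$ anyway, and $\cong$-classes are respected by $\wedge$ (if $X\cong X'$ then $X\wedge Z\cong X'\wedge Z$, a consequence of the absorption/associative laws); entropy is constant on $\cong$-classes by (a), so substituting any $X_i$ or $Y$ by an equivalent random variable leaves $\ent{X_1\wedge\cdots\wedge X_n\wedge Y}$ unchanged. For \Szero, commutativity and associativity of $\wedge$ make the meet $X_1\wedge\cdots\wedge X_n$ invariant (up to $\cong$) under reordering; combined with \Eq we get symmetry. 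For \TM, assume $Y\preceq Z$; since $\preceq$ is preserved under $\wedge$, we have $X_1\wedge\cdots\wedge X_n\wedge Y\preceq X_1\wedge\cdots\wedge X_n\wedge Z$, and then (b) yields the desired entropy inequality. For \Mzero, the inequality $\Iw^0(X_1,\ldots,X_n,W:Y)\leq \Iw^0(X_1,\ldots,X_n:Y)$ follows because $(X_1\wedge\cdots\wedge X_n\wedge W)\wedge Y\preceq (X_1\wedge\cdots\wedge X_n)\wedge Y$ (the meet with $W$ can only be informationally poorer, again by (b)); and if some $X_j\preceq W$, the generalized absorption law gives $X_j\wedge W\cong X_j$, so reassociating shows the augmented meet is $\cong$ to the original one, yielding equality via \Eq.

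The only nontrivial part is the failure of \LPzero, which is a non-implication rather than an implication, so a single counterexample suffices. I would invoke the \textsc{Subtle} distribution displayed in Figure~\ref{fig:Subtle}: there $\Iw^0(X_1,X_2:Y)=0$ while the derived top-node partial information $\opI_\partial(X_1\vee X_2:Y)$ equals $-0.252$ bits, directly witnessing a negative ``partial information'' and hence the failure of \LPzero. (If one prefers a self-contained verification inside the appendix, one can recompute $\info{X_1\vee X_2}{Y}$, $\info{X_i}{Y}$ and $\ent{X_1\wedge X_2\wedge Y}$ for that distribution and plug into the identity $\info{X_1\vee X_2}{Y}-\info{X_1}{Y}-\info{X_2}{Y}+\Iw^0(X_1,X_2:Y)$.)

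The main potential obstacle is bookkeeping in \Mzero: one must be careful that ``$X_j\wedge W\cong X_j$'' really allows us to delete $W$ from the multi-way meet $X_1\wedge\cdots\wedge X_n\wedge W\wedge Y$. This is handled by using associativity to group $X_j\wedge W$ first, applying generalized absorption, and then invoking \Eq. Everything else is a direct appeal to the algebraic properties already listed before Lemma~\ref{lem:Iw0prop0}.
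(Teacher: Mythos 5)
Your proposal is correct and follows essentially the same route as the paper's own proof: each positive property is reduced, via the simplification $\Iw^0 = \ent{X_1\wedge\cdots\wedge X_n\wedge Y}$, to nonnegativity, invariance, and monotonicity of entropy together with commutativity and generalized absorption for $\wedge$ (including the same $\wedge_i X_i\wedge W\wedge Y \preceq \wedge_i X_i\wedge Y$ step for \Mzero), and the failure of \LPzero is witnessed by the same \textsc{Subtle} counterexample. The paper only adds the side remark that \LPzero amounts to a supermodularity law for common information known to be false in general, and note that for $\Iw^0$ the verification should in principle use zero-error quantities, which in \textsc{Subtle} happen to coincide with the Shannon ones since $Y\cong X_1\vee X_2$.
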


\begin{proof}
\GP follows immediately from the nonnegativity of the entropy.
\Eq follows from the invariance of entropy within the equivalence classes
induced by $\cong$. \TM follows from the monotonicity of the entropy with
respect to $\preceq$. \Mzero also follows from the monotonicity of the entropy,
but now applied to $\wedge_i X_i \wedge W \wedge Y \preceq \wedge_i X_i \wedge Y$.
If there exists some $j$, such that $X_j \preceq W$, then generalized
absorption says that $\wedge_i X_i \wedge W \wedge Y \cong \wedge_i X_i \wedge Y$,
and thus, we have the equality condition. \Szero is a consequence of the commutativity
of the $\wedge$ operator. To see that \LPzero is not satisfied by the $\Iw^0$,
we point to the example, \textsc{Subtle} (\figref{fig:Subtle}), which has negative
synergy. One can also rewrite \LPzero as the supermodularity law for
common information, which is known to be false in general.
(See \cite{li11}, Section 5.4.)
\end{proof}

\begin{lem}\label{lem:Iw0propInfor}
$\Iw^0\!\left(X_1,\ldots,X_n \: Y \right)$ satisfies
\LB,
\SR, and
\Id.
\end{lem}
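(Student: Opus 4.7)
The plan is to handle all three properties as short consequences of the simplified formula $\Iw^0(X_1,\ldots,X_n:Y) = H(X_1 \wedge \cdots \wedge X_n \wedge Y)$ established in Equation~\eqref{eq:zerocap}, together with the lattice algebra and the entropy monotonicity facts (a)--(c) from Section~\ref{sec:infolattice}. Each property reduces to a one-line manipulation, so the proof will be three short paragraphs, one per property, with no single step posing real difficulty.

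First I would dispatch \SR. By definition, $\Iw^0(X_1:Y) = H(X_1 \wedge Y) = \infozero{X_1}{Y}$, which is exactly $\Infor{X_1}{Y}$ in the zero-error setting, so the equality holds immediately.

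Next, for \LB, suppose $Q \preceq X_i$ for every $i \in \{1,\ldots,n\}$. By the definition of the meet as an informationally richest lower bound, $Q \preceq X_1 \wedge \cdots \wedge X_n$. Applying monotonicity of $\preceq$ under $\wedge$ yields
\[
Q \wedge Y \;\preceq\; X_1 \wedge \cdots \wedge X_n \wedge Y,
\]
and then fact (b) gives $H(Q \wedge Y) \leq H(X_1 \wedge \cdots \wedge X_n \wedge Y)$. The left-hand side is $\infozero{Q}{Y} = \Infor{Q}{Y}$ and the right-hand side is $\Iw^0(X_1,\ldots,X_n:Y)$, which is precisely \LB.

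Finally, for \Id, I would unfold
\[
\Iw^0(X, Y : X \vee Y) \;=\; H\!\left( X \wedge Y \wedge (X \vee Y) \right).
\]
Since $X \wedge Y \preceq X \preceq X \vee Y$, the generalized absorption law gives $(X \wedge Y) \wedge (X \vee Y) \cong X \wedge Y$, so by invariance of entropy under $\cong$ (fact (a)) the right-hand side equals $H(X \wedge Y) = \infozero{X}{Y}$, which is $\mathcal{I}(X:Y)$ in the zero-error setting. The only step worth a second thought is ensuring the generalized absorption chains correctly through a triple meet; but since $\wedge$ is associative and commutative up to $\cong$, this reduction is routine. Thus no step is a genuine obstacle; the entire lemma is essentially bookkeeping on the lattice identities collected in Section~\ref{sec:infolattice}.
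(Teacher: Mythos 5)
Your proof is correct and follows essentially the same route as the paper's: \SR from the simplified formula and the definition of zero-error information, \LB from $Q \preceq X_1 \wedge \cdots \wedge X_n$ plus entropy monotonicity, and \Id from the absorption law together with invariance of entropy under $\cong$; you merely spell out the intermediate lattice steps that the paper leaves implicit.
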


\begin{proof}
For \LB, note that $Q \preceq X_1 \wedge \cdots \wedge X_n$ for any $Q$ obeying
$Q\preceq X_i$ for $i=1,\ldots,n$. Then, apply the monotonicity of the entropy.
\SR is trivially true given Lemma~\ref{lem:Iw0} and the definition of
zero-error information. Finally, \Id follows from the absorption law and
the invariance of the entropy.
\end{proof}

\begin{lem}\label{lem:Iw0prop1}
$\Iw^0\!\left(X_1,\ldots,X_n \: Y \right)$ satisfies \Mone and \Sone,
but not \LPone. \end{lem}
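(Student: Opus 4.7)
The plan is to exploit the fact, established earlier by Lemma~\ref{lem:Iw0} (and the resulting Equation~\eqref{eq:zerocap}), that
\[
 \Iw^0(X_1,\ldots,X_n : Y) = \ent{ X_1 \wedge \cdots \wedge X_n \wedge Y }.
\]
This closed form is manifestly symmetric in \emph{all} of its arguments $X_1,\ldots,X_n,Y$ once one invokes commutativity and associativity of $\wedge$ together with invariance of entropy under $\cong$. That single observation handles \Sone and does most of the work for \Mone; the remaining piece of \Mone is to upgrade the equality clause of \Mzero so that $Y$ is allowed to play the role of the dominated predictor. The failure of \LPone will be inherited, essentially for free, from the failure of \LPzero in Lemma~\ref{lem:Iw0prop0}.

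For \Mone, first I would note that the inequality $\Iw^0(X_1,\ldots,X_n,W:Y) \leq \Iw^0(X_1,\ldots,X_n:Y)$ is already supplied by \Mzero (Lemma~\ref{lem:Iw0prop0}), so only the equality condition needs attention. Suppose $Z \preceq W$ for some $Z \in \{X_1,\ldots,X_n,Y\}$. If $Z = X_j$ for some $j$, the equality is immediate from the equality clause of \Mzero. If instead $Z = Y$, then generalized absorption gives $Y \wedge W \cong Y$, so by associativity and commutativity of $\wedge$,
\[
 X_1 \wedge \cdots \wedge X_n \wedge W \wedge Y
 \;\cong\; X_1 \wedge \cdots \wedge X_n \wedge (W \wedge Y)
 \;\cong\; X_1 \wedge \cdots \wedge X_n \wedge Y,
\]
and invariance of entropy under $\cong$ yields the required equality.

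For \Sone, I would simply invoke commutativity and associativity of $\wedge$: any reordering of $X_1,\ldots,X_n,Y$ leaves the meet $X_1 \wedge \cdots \wedge X_n \wedge Y$ unchanged up to $\cong$, hence leaves its entropy, and thus $\Iw^0$, unchanged. For \LPone, I would observe that \LPone is by definition a strengthening of \LPzero to all $n$. Since Lemma~\ref{lem:Iw0prop0} already exhibits negative synergy for $\Iw^0$ in the two-variable example \textsc{Subtle} (Figure~\ref{fig:Subtle}), \LPzero fails, and therefore \LPone fails \emph{a fortiori}.

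The main potential obstacle is the $Z = Y$ case of \Mone: one has to be careful that the equality clause really does survive when the dominated variable is the target rather than a predictor. The closed form from Equation~\eqref{eq:zerocap} makes the predictor and target roles symmetric, which is exactly what lets the argument go through; without that symmetry the strong variants \Mone and \Sone would be substantially harder to establish.
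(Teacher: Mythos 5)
Your proposal is correct and follows essentially the same route as the paper: the paper proves \Mone by the same absorption-plus-monotonicity argument used for \Mzero (your $Z=Y$ case via $W \wedge Y \cong Y$ is exactly what that terse remark means), \Sone by commutativity of $\wedge$, and \LPone by inheritance from the failure of \LPzero on \textsc{Subtle}. You merely spell out the details more explicitly than the paper does.
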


\begin{proof}
\Mone follows using the absorption and monotonicity of the entropy in nearly the same way that \Mzero does. \Sone follows from commutativity, and \LPone is false, because \LPzero is false.
\end{proof}

\vspace{12pt}
\noindent{\em B. Properties of $\Iw$}
\label{app:Iw}

The proofs here are nearly identical to those used for $\Iw^0$.\\

\begin{lem}\label{lem:Iwprop0}
$\Iw\!\left(X_1,\ldots,X_n \: Y \right)$ satisfies
\GP,
\Eq,
\TM,
\Mzero, and
\Szero, but not
\LPzero.
\end{lem}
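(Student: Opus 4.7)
The plan is to mirror the proof of Lemma~\ref{lem:Iw0prop0}, using Lemma~\ref{lem:Iw} to rewrite $\Iw(X_1,\ldots,X_n : Y) = \info{X_1 \wedge \cdots \wedge X_n}{Y}$, and then to invoke basic properties of Shannon mutual information together with the lattice identities from Section~\ref{sec:infolattice}. Concretely, \GP follows from the nonnegativity of Shannon mutual information, and \Eq follows from invariance property (a) of Section~\ref{sec:infolattice}: both the meet $X_1 \wedge \cdots \wedge X_n$ (which is well-defined only up to $\cong$) and the target $Y$ may be replaced by any informationally equivalent random variables without changing $\info{\cdot}{\cdot}$.

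For \TM, if $Y \preceq Z$ then $Y = f(Z)$ almost surely, and the data processing inequality (equivalently, the monotonicity of mutual information under $\preceq$ in either argument) delivers $\info{X_1 \wedge \cdots \wedge X_n}{Y} \leq \info{X_1 \wedge \cdots \wedge X_n}{Z}$. For \Mzero, the lattice inequality $X_1 \wedge \cdots \wedge X_n \wedge W \preceq X_1 \wedge \cdots \wedge X_n$ combined with the same monotonicity, now applied in the first argument, yields the required upper bound on $\Iw(X_1,\ldots,X_n,W : Y)$. In the equality case some $X_j \preceq W$, and generalized absorption gives $X_j \wedge W \cong X_j$; commutativity and associativity of $\wedge$ then force $X_1 \wedge \cdots \wedge X_n \wedge W \cong X_1 \wedge \cdots \wedge X_n$, after which \Eq closes the case. \Szero is immediate from the commutativity of $\wedge$.

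To see that \LPzero fails, it suffices to exhibit a distribution producing negative derived synergy. Example \textsc{Subtle} (\figref{fig:Subtle}) does exactly that: there $\Iw$ coincides with $\Iw^0$, yielding a synergy of $-0.252$ bits, so the derived partial information at the top node is negative. The only mildly delicate step in the entire argument is the equality clause of \Mzero, where one must check that an intermediate $\cong$-equivalence between meets is enough to conclude exact equality of the two mutual informations; this is precisely the content of invariance property (a). Beyond this bookkeeping, the proof is a line-for-line adaptation of Lemma~\ref{lem:Iw0prop0} with ``entropy of the joint meet with $Y$'' replaced throughout by ``mutual information with $Y$,'' and I do not anticipate any genuine obstacle.
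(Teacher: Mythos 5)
Your proof is correct and follows essentially the same route as the paper's: nonnegativity and invariance of mutual information for \GP and \Eq, the data processing inequality for \TM, monotonicity of $\info{Y}{\,\cdot\,}$ plus generalized absorption for \Mzero, and commutativity of $\wedge$ for \Szero. The only difference is the choice of counterexample for \LPzero: the paper cites \textsc{ImperfectRdn} (synergy $-0.99$ bits), whereas you cite \textsc{Subtle} (synergy $-0.252$ bits), which works equally well since $\Iw$ vanishes there too.
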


\begin{proof}
\GP follows from the nonnegativity of mutual information. \Eq follows from
the invariance of entropy. \TM follows from the data processing inequality.
\Mzero follows from applying the monotonicity of the mutual information
$\info{Y}{\,\cdot\,}$
to $\wedge_i X_i \wedge W \preceq \wedge_i X_i$. If there exists some $j$, such
that $X_j \preceq W$, then generalized absorption says that
$\wedge_i X_i \wedge W \cong \wedge_i X_i$, and thus, we have the equality
condition. \Szero follows from commutativity, and a counterexample for
\LPzero is given by \textsc{ImperfectRdn} (\figref{fig:ImperfectRdn}).
\end{proof}

\begin{lem}\label{lem:IwpropInfor}
$\Iw\!\left(X_1,\ldots,X_n \: Y \right)$ satisfies
\LB and \SR, but not \Id.
\end{lem}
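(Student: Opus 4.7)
My plan is to handle the three assertions in order, leveraging the simplification $\Iw(X_1,\ldots,X_n:Y) = \info{X_1 \wedge \cdots \wedge X_n}{Y}$ established in Lemma~\ref{lem:Iw} together with the lattice machinery from Section~\ref{sec:infolattice}. The two positive claims are essentially immediate; the main work is exhibiting an explicit counterexample for \Id.

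For \LB, I would start from an arbitrary $Q$ with $Q \preceq X_i$ for every $i$. By the defining universal property of the meet, $Q \preceq X_1 \wedge \cdots \wedge X_n$, so $Q$ is almost surely a function of $X_1 \wedge \cdots \wedge X_n$. The data processing inequality then gives $\info{Q}{Y} \leq \info{X_1 \wedge \cdots \wedge X_n}{Y} = \Iw(X_1,\ldots,X_n:Y)$, which is precisely \LB. For \SR, the idempotent law $X_1 \wedge X_1 \cong X_1$ (specializing to $n=1$) together with the invariance of mutual information under $\cong$ (item (a) of Section~\ref{sec:infolattice}) yields $\Iw(X_1:Y) = \info{X_1 \wedge \cdots \wedge X_1}{Y} = \info{X_1}{Y}$, as required.

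For the failure of \Id, I would produce a distribution in which $X_1$ and $X_2$ have strictly positive Shannon mutual information while admitting only a trivial common random variable, so that $\info{X_1 \wedge X_2}{X_1 \vee X_2} < \info{X_1}{X_2}$. The \textsc{Subtle} distribution of Figure~\ref{fig:Subtle} does exactly this: its support is connected in the joint-support bipartite graph (via the edge set $\{(0,0),(0,1),(1,1)\}$), so the Gács--Körner common random variable $X_1 \wedge X_2$ is constant, whereas $\info{X_1}{X_2} = 0.252$ bits. Hence
\begin{equation*}
\Iw(X_1,X_2 : X_1 \vee X_2) = \ent{X_1 \wedge X_2} \cdot 0 = 0 \neq 0.252 = \info{X_1}{X_2},
\end{equation*}
which violates \Id.

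The only mild subtlety is confirming that the common random variable really is trivial for \textsc{Subtle}; this is immediate from the connected-component characterization recalled in Section~\ref{sec:infolattice} and reused in Lemma~\ref{lem:Iwprop0}, so no new machinery is required. Everything else is a direct appeal to data processing, invariance under $\cong$, and the idempotent law.
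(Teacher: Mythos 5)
Your proposal is correct and follows essentially the same route as the paper: the meet's universal property plus monotonicity of $\info{Y}{\,\cdot\,}$ for \LB, and the $n=1$ case of the simplification lemma for \SR. For the failure of \Id, the paper argues abstractly that $\Iw(X, Y : X \vee Y) = \ent{X \wedge Y}$, i.e., the zero-error information, which in general is strictly less than $\info{X}{Y}$, while you instantiate exactly this failure with the \textsc{Subtle} distribution (where $X_1 \wedge X_2$ is constant yet $\info{X_1}{X_2}=0.252$ bits); both are valid, and aside from the stray ``$\cdot\,0$'' in your displayed equation (clearly meant to read $\ent{X_1\wedge X_2}=0$), nothing is amiss.
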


\begin{proof}
For \LB, note that $Q \preceq X_1 \wedge \cdots \wedge X_n$ for any $Q$ obeying
$Q\preceq X_i$ for $i=1,\ldots,n$. Then, apply the monotonicity of the mutual
information to $\info{Y}{\,\cdot\,}$. \SR is trivially true given Lemma~\ref{lem:Iw}.
Finally, \Id does not hold, since $X \wedge Y \preceq X \vee Y$, and thus,
$\Iw\!\left(X, Y \: Y \wedge Y\right) = \ent{X \wedge Y}$.
\end{proof}

\begin{lem}\label{lem:Iwprop1}
$\Iw\!\left(X_1,\ldots,X_n \: Y \right)$ does not satisfy
\Mone, \Sone, or \LPone.\end{lem}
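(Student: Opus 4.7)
The plan is to construct explicit counterexamples for all three properties. The root cause of the failures is the structural asymmetry of $\Iw$: predictors enter through the common random variable $X_1 \wedge \cdots \wedge X_n$, but the target $Y$ enters only through Shannon mutual information against that meet. This asymmetry---absent for $\Iw^0$ because there $Y$ also sits inside a single entropy of the full meet---is what allows \Mone, \Sone, and \LPone to fail.

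For \Mone, I would first note that the weak inequality is already supplied by \Mzero (Lemma~\ref{lem:Iwprop0}), so only the equality condition is at risk. The most direct probe is to take $W = Y$, which trivially satisfies $Y \preceq W$, and check whether $\Iw(X_1,\ldots,X_n, Y : Y) = \Iw(X_1,\ldots,X_n : Y)$. Writing $A = X_1 \wedge \cdots \wedge X_n$, this reduces to the identity $H(A \wedge Y) = I(A : Y)$, which generically fails when $A \not\preceq Y$. I would instantiate this with $n = 2$ and $X_1 = X_2 = A$, so by idempotency $X_1 \wedge X_2 \cong A$, using the joint distribution $P(A=0, Y=0) = P(A=1, Y=1) = 1/3$ and $P(A=0, Y=1) = P(A=1, Y=0) = 1/6$. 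The support graph of $(A, Y)$ is connected, so $A \wedge Y$ is constant and $H(A \wedge Y) = 0$, yet $I(A : Y) \approx 0.082$ bits, breaking the required equality.

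For \Sone, I would recycle the same example: reordering $X_1$ with $Y$ yields $\Iw(Y, X_2 : X_1) = I(Y \wedge A : A) = H(Y \wedge A) = 0$, while $\Iw(X_1, X_2 : Y) = I(A : Y) > 0$, so symmetry fails. For \LPone, the failure is inherited from Lemma~\ref{lem:Iwprop0}: $\Iw$ already violates \LPzero (witnessed by \textsc{ImperfectRdn} in Figure~\ref{fig:ImperfectRdn}), and the $n = 2$ case of \LPone coincides with \LPzero, so \LPone fails a fortiori.

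The main subtlety is choosing the counterexample: it must have a connected support graph for $(A, Y)$ to trivialize the meet, yet preserve strictly positive Shannon mutual information, and a single distribution should efficiently witness both \Mone and \Sone to keep the argument compact. None of the paper's worked examples (\textsc{Unq}, \textsc{RdnXor}, \textsc{ImperfectRdn}, \textsc{Subtle}) simultaneously satisfy these requirements, so the construction above---although minimal---must be introduced fresh. Once the distribution is fixed, the remaining verifications are routine applications of idempotency, generalized absorption, and the definition of $\Iw$.
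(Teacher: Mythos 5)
Your proof is correct and follows essentially the same route as the paper: break the equality clause of \Mone by adjoining $W=Y$ (reducing to $H(A\wedge Y)\neq \info{A}{Y}$ for a meet $A$ with trivial common part but positive mutual information), break \Sone by swapping a predictor with the target, and inherit the failure of \LPone from the failure of \LPzero. The only difference is that you construct a fresh full-support distribution, whereas the paper simply reuses \textsc{ImperfectRdn} (with predictor $X_2$ and $W=Y$, one has $\Iwe{X_2}{Y}=0.99$ bits but $\Iwe{X_2,Y}{Y}=0$ bits, since the support of $(X_2,Y)$ is connected), so your aside that none of the paper's examples would serve is inaccurate, though it does not affect the validity of your argument.
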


\begin{proof}
\Mone is false due to a counterexample provided by \textsc{ImperfectRdn}
(\figref{fig:ImperfectRdn}), where $\Iwe{X_1}{Y}=0.99$ bits and
$\Iwe{X_1, Y}{Y}=0$ bits. \Sone is false, since
$\Iwe{X,X}{Y} \neq \Iwe{X,Y}{X}$. Finally, \LPone is false, due to \LPzero
being false.
\end{proof}

\vspace{12pt}
\noindent{\em C. Miscellaneous Results}
\label{appendix:miscproofs}
\vspace{12pt}

\begin{lem}\label{lem:Iw0} Simplification of $\Iw^0$.
\begin{align*}
 \Iw^0\!\left(X_1, \ldots, X_n\!:\!Y\right) &\equiv \max_{ \Pr(Q|Y) } \infozero{Q}{Y}
 \hspace{0.25in} \textnormal{subject to } Q \preceq X_i \ \forall i \in \{1, \ldots, n\} \\
 &= \ent{X_1\wedge\cdots\wedge X_n\wedge Y}
\end{align*}
\end{lem}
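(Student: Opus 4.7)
The plan is to prove the identity in two directions, achievability and optimality, by leveraging the lattice properties of $\wedge$ already established in Section~\ref{sec:infolattice}. Let me write $M \equiv X_1 \wedge \cdots \wedge X_n$ for brevity.

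First I would verify that $M$ is feasible, i.e., it satisfies the constraint $Q \preceq X_i$ for every $i$. This is immediate from the defining property of the meet: $M \preceq X_i$ for each $i \in \{1,\ldots,n\}$. By associativity of $\wedge$ and the definition $\infozero{A}{B} \equiv \ent{A \wedge B}$, choosing $Q = M$ yields
\[
\infozero{M}{Y} = \ent{M \wedge Y} = \ent{X_1 \wedge \cdots \wedge X_n \wedge Y},
\]
establishing that the maximum is at least this value.

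Next I would show optimality: for any feasible $Q$ with $Q \preceq X_i$ for all $i$, the universal property of the meet (if $Z \preceq X_i$ for all $i$, then $Z \preceq X_1 \wedge \cdots \wedge X_n$) forces $Q \preceq M$. Applying preservation of $\preceq$ under $\wedge$ gives $Q \wedge Y \preceq M \wedge Y$, and then monotonicity of the entropy with respect to $\preceq$ (item (b) in Section~\ref{sec:infolattice}) yields
\[
\infozero{Q}{Y} = \ent{Q \wedge Y} \leq \ent{M \wedge Y} = \ent{X_1 \wedge \cdots \wedge X_n \wedge Y}.
\]
Combining the two bounds, the maximum equals $\ent{X_1 \wedge \cdots \wedge X_n \wedge Y}$, as claimed.

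There is no real obstacle here; the whole argument is a direct application of the meet's universal property plus monotonicity of entropy, both of which have already been recorded. The only subtlety worth flagging is that the optimization is written over conditional distributions $\Pr(Q|Y)$, but the constraint $Q \preceq X_i$ for every $i$ is a purely structural (almost-sure functional) condition, so feasibility and the inequality above are well-defined up to the $\cong$ equivalence discussed earlier, and all relevant quantities (entropies, the meet) are invariant on these equivalence classes.
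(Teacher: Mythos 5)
Your proof is correct and follows essentially the same route as the paper's: take $Q = X_1 \wedge \cdots \wedge X_n$ as a feasible (and richest) choice, then use preservation of $\preceq$ under $\wedge$ together with monotonicity of the entropy to bound any other feasible $Q$. The only difference is that you spell out the intermediate step $Q \wedge Y \preceq M \wedge Y$ explicitly, which the paper leaves implicit.
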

\begin{proof}
Recall that $\infozero{Q}{Y} \equiv \ent{Q \wedge Y}$, and note that
$\wedge_i X_i$ is a valid choice for $Q$. By definition, $\wedge_i X_i$ is the
richest possible $Q$, and so, monotonicity with respect to $\preceq$ then
guarantees that $\ent{\wedge_i X_i \wedge Y} \geq \ent{Q \wedge Y}$.
\end{proof}

\begin{lem}\label{lem:Iw} Simplification of $\Iw$.
\begin{align*}
 \Iw\!\left(X_1, \ldots, X_n\!:\!Y\right) &\equiv \max_{ \Pr(Q|Y) } \info{Q}{Y}
 \hspace{0.25in} \textnormal{subject to } Q \preceq X_i \ \forall i \in \{1, \ldots, n\} \\
 &= \info{X_1\wedge\cdots\wedge X_n}{Y}
\end{align*}
\end{lem}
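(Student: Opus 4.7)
The plan is to mirror the proof of Lemma~\ref{lem:Iw0} almost verbatim, with the Shannon mutual information playing the role previously played by the zero-error information. The claim splits into two opposing inequalities whose conjunction gives the desired equality.

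For the ``$\geq$'' direction, I would observe that $Q := X_1 \wedge \cdots \wedge X_n$ is itself a feasible candidate in the maximization: by the very definition of the meet, $X_1 \wedge \cdots \wedge X_n \preceq X_i$ holds for every $i \in \{1,\ldots,n\}$. Hence the supremum is at least the value of the objective at this particular $Q$, namely $\info{X_1 \wedge \cdots \wedge X_n}{Y}$.

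For the ``$\leq$'' direction, I would fix an arbitrary feasible $Q$, so that $Q \preceq X_i$ for every $i$. The universal property of the meet recorded in Section~\ref{sec:infolattice} (``if $Z \preceq X$ and $Z \preceq Y$, then $Z \preceq X \wedge Y$'', applied inductively in $n$) then forces $Q \preceq X_1 \wedge \cdots \wedge X_n$. Invoking property~(b) of Section~\ref{sec:infolattice}, which gives $\ent{Y|Q} \geq \ent{Y|X_1 \wedge \cdots \wedge X_n}$ whenever $Q \preceq X_1 \wedge \cdots \wedge X_n$, one obtains
$$\info{Q}{Y} = \ent{Y} - \ent{Y|Q} \leq \ent{Y} - \ent{Y|X_1 \wedge \cdots \wedge X_n} = \info{X_1 \wedge \cdots \wedge X_n}{Y}.$$
Taking the supremum over feasible $Q$ on the left then yields the reverse inequality and closes the argument.

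Honestly, there is no real obstacle here: the monotonicity of Shannon mutual information with respect to $\preceq$ on one argument is just the data processing inequality, already packaged as property~(b) of the preceding section. With that in hand, the proof involves no quantitative estimation whatsoever; the only thing to keep track of is that the universal property of the meet is what bridges the hypothesis ``$Q$ is a function of every $X_i$'' with the conclusion ``$Q$ is a function of $X_1 \wedge \cdots \wedge X_n$'', after which monotonicity immediately does the rest.
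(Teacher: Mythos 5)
Your argument is correct and follows essentially the same route as the paper's proof: observe that $X_1 \wedge \cdots \wedge X_n$ is feasible, note that any feasible $Q$ satisfies $Q \preceq X_1 \wedge \cdots \wedge X_n$ by the universal property of the meet, and conclude by monotonicity of $\info{\,\cdot\,}{Y}$ with respect to $\preceq$. The only difference is presentational: you unpack that monotonicity via property (b) of Section~\ref{sec:infolattice} (equivalently, the data processing inequality), which the paper simply invokes as known.
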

\begin{proof}
Note that $\wedge_i X_i$ is a valid choice for $Q$. By definition,
$\wedge_i X_i$ is the richest possible $Q$, and so, monotonicity with respect
to $\preceq$ then guarantees that $\info{Q}{Y} \leq \info{\wedge_i X_i}{Y}$.
\end{proof}

\begin{lem}\label{lem:IwleqImin}
$\Iwe{X_1, \ldots, X_n}{Y} \leq \opI_{\min}\left( X_1, \ldots, X_n : Y \right)$
\end{lem}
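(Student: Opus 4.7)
The plan is to combine the simplification of $\Iw$ afforded by Lemma~\ref{lem:Iw} with a pointwise application of the data processing inequality for Kullback--Leibler divergence. Writing $Q \equiv X_1 \wedge \cdots \wedge X_n$, Lemma~\ref{lem:Iw} gives $\Iw(X_1,\ldots,X_n:Y) = \info{Q}{Y}$. Expanding this as an expectation of KL divergences over $Y$, I have
\[
 \Iw(X_1,\ldots,X_n:Y) = \sum_{y} \Prob{y}\, \DKL{\Prob{Q|y}}{\Prob{Q}},
\]
while the definition~\eqref{eq:Imin_DKL} writes $\opI_{\min}$ as the same kind of expectation, but with a pointwise minimum over $i$ of $\DKL{\Prob{X_i|y}}{\Prob{X_i}}$. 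So it suffices to establish the stronger, pointwise bound
\[
 \DKL{\Prob{Q|y}}{\Prob{Q}} \ \leq\ \DKL{\Prob{X_i|y}}{\Prob{X_i}} \qquad \text{for every } i \text{ and every } y,
\]
after which taking the minimum over $i$ and averaging over $y$ yields the claim.

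To prove the pointwise bound, I would use that $Q \preceq X_i$ by the definition of $\wedge$, so there is a deterministic function $f_i$ with $Q = f_i(X_i)$ almost surely. Conditioning on $Y=y$ does not destroy this functional relationship, so $\Prob{Q|y}$ is the pushforward of $\Prob{X_i|y}$ under $f_i$, and $\Prob{Q}$ is likewise the pushforward of $\Prob{X_i}$. The data processing inequality for KL divergence states exactly that pushing two distributions through a common channel (here, the deterministic map $f_i$) cannot increase their divergence. This delivers the required inequality for each $y$ and each $i$.

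Assembling the pieces: fix $y$, apply the pointwise DPI bound to whichever $i$ attains $\min_i \DKL{\Prob{X_i|y}}{\Prob{X_i}}$, weight by $\Prob{y}$, and sum:
\[
 \sum_{y}\Prob{y}\,\DKL{\Prob{Q|y}}{\Prob{Q}} \ \leq\ \sum_{y}\Prob{y}\,\min_{i}\DKL{\Prob{X_i|y}}{\Prob{X_i}},
\]
which is precisely $\Iw(X_1,\ldots,X_n:Y) \leq \opI_{\min}(X_1,\ldots,X_n:Y)$. The only potentially subtle step is the pointwise DPI invocation: one must be careful that the pushforward is applied to both the conditional and the marginal by the \emph{same} channel $f_i$, which is automatic because $f_i$ depends only on $X_i$, not on $y$. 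That is really the whole argument; no optimization over $Q$ is needed because Lemma~\ref{lem:Iw} has already pinned down the maximizer.
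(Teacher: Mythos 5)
Your proof is correct and takes essentially the same route as the paper: both reduce via Lemma~\ref{lem:Iw} to the pointwise, specific-information comparison $\DKL{\Prob{Q|y}}{\Prob{Q}} \leq \min_i \DKL{\Prob{X_i|y}}{\Prob{X_i}}$ with $Q = X_1\wedge\cdots\wedge X_n$, then average over $y$. The only difference is that the paper cites the monotonicity of specific information on the lattice, whereas you justify that same step explicitly via the data processing inequality for KL divergence applied to the deterministic maps $f_i$ with $Q=f_i(X_i)$, which is a valid, self-contained substitute.
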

\begin{proof}
We need only show that
$\opI(\wedge_i X_i : Y) \leq \opI_{\min}\left( X_1, \ldots, X_n : Y \right)$.
This can be restated in terms of the specific information:
$\opI(\wedge_i X_i : y) \leq \min_i \opI\left( X_i : y \right)$ for each $y$.
Since the specific information increases monotonically on the
lattice (\textit{cf}. Section~\ref{sec:infolattice} or~\cite{li11}), it follows that
$\opI(\wedge_i X_i : y) \leq \opI( X_j : y)$ for any $j$.
\end{proof}

\textbf{Author Contributions} Each of the authors contributed to the design, analysis, and writing of the study.

\textbf{Conflicts of Interest} The authors declare no conflicts of interest.
\bibliography{quant_synergy}




\end{document}